\documentclass[letterpaper, 10 pt, conference]{ieeeconf}  

\IEEEoverridecommandlockouts                              
\usepackage{amsmath,amssymb,amsthm,bm}
\usepackage{graphicx}
\usepackage{booktabs}
\usepackage{array}
\usepackage{pifont}

\usepackage{algorithm}
\usepackage{algpseudocode}
\usepackage{xcolor}
\makeatletter
\def\thm@space@setup{\thm@preskip=1pt \thm@postskip=1pt}
\makeatother

\makeatletter
\long\def\@makecaption#1#2{%
  \ifx\@captype\@IEEEtablestring
    \vskip 5pt plus 1pt minus 1pt
    \setbox\@tempboxa\hbox{\footnotesize #1.~~ #2}%
    \ifdim \wd\@tempboxa >\hsize
      \setbox\@tempboxa\hbox{\footnotesize #1.~~ }%
      \parbox[t]{\hsize}{\footnotesize\noindent\unhbox\@tempboxa#2}%
    \else
      \hbox to\hsize{\footnotesize\box\@tempboxa\hfil}%
    \fi
    \vskip 1pt
  \else
    \@IEEEfigurecaptionsepspace
    \setbox\@tempboxa\hbox{\footnotesize #1.~~ #2}%
    \ifdim \wd\@tempboxa >\hsize
      \setbox\@tempboxa\hbox{\footnotesize #1.~~ }%
      \parbox[t]{\hsize}{\footnotesize\noindent\unhbox\@tempboxa#2}%
    \else
      \hbox to\hsize{\footnotesize\hfil\box\@tempboxa\hfil}%
    \fi
  \fi}
\makeatother

\newcommand{\Qbar}{\bar{Q}}
\newcommand{\codeurl}{https://github.com/SAILRIT/Agent-shield-certificates}
\usepackage[noadjust]{cite}
\usepackage{microtype}
\PassOptionsToPackage{hyphens}{url}
\usepackage[colorlinks=true,linkcolor=blue,citecolor=blue,urlcolor=blue]{hyperref}
\hypersetup{
  pdftitle={Controller-Agnostic Runtime Safety Certificates for Tool-Using AI
            Agents via Data-Driven Robust Reachability},
  pdfauthor={Ali Baheri; Peng Wei},
  pdfsubject={Data-driven safety certification for tool-using AI agents},
  pdfkeywords={data-driven control; learning for control; Markov processes; reachability analysis; safety filters},
  pdflang={en},
  pdfdisplaydoctitle=true}

\theoremstyle{plain}
\newtheorem{theorem}{Theorem}
\newtheorem{proposition}{Proposition}
\newtheorem{lemma}{Lemma}
\newtheorem{corollary}{Corollary}
\theoremstyle{definition}
\newtheorem{assumption}{Assumption}

\newcommand{\Sset}{\mathcal{S}}
\newcommand{\Aset}{\mathcal{A}}
\newcommand{\Su}{\mathcal{S}_{\mathrm{u}}}
\newcommand{\Dshield}{\mathcal{D}}
\newcommand{\Cset}{\mathcal{C}}
\newcommand{\Phat}{\widehat{P}}
\newcommand{\Vbar}{\bar{V}}
\newcommand{\Vstar}{V^{\star}}
\newcommand{\simplex}{\Delta(\Sset)}
\newcommand{\halt}{\textsc{halt}}
\newcommand{\inner}[2]{\langle #1,\, #2\rangle}
\newcommand{\one}{\mathbf{1}}
\newcommand{\Prob}{\mathbb{P}}
\newcommand{\Ind}[1]{\mathbf{1}\!\left[#1\right]}

\makeatletter
\let\oldthebibliography\thebibliography
\renewcommand{\thebibliography}[1]{\oldthebibliography{#1}%
  \setlength{\itemsep}{0pt}\setlength{\parsep}{0pt}\setlength{\topsep}{0pt}\setlength{\partopsep}{0pt}}
\makeatother

\makeatletter
\def\ps@preprint{\def\@oddhead{}\def\@evenhead{}%
  \def\@oddfoot{\hfil\footnotesize\thepage\hfil}\def\@evenfoot{\hfil\footnotesize\thepage\hfil}}
\makeatother
\begin{document}

\title{\LARGE \bf Controller-Agnostic Runtime Safety Certificates for Tool-Using AI Agents \\ via Data-Driven Robust Reachability}

\author{Ali Baheri$^{1}$ and Peng Wei$^{2}$%
\thanks{$^{1}$Ali Baheri is with the Department of Mechanical Engineering,
Kate Gleason College of Engineering, Rochester Institute of Technology,
Rochester, NY, USA.
{\tt\small akbeme@rit.edu}}%
\thanks{$^{2}$Peng Wei is with the Department of Mechanical and Aerospace Engineering,
School of Engineering and Applied Science, George Washington University,
Washington, DC, USA.
{\tt\small pwei@gwu.edu}}%
}

\maketitle
\thispagestyle{empty}

\begin{abstract}
Software agents built on large language models act by issuing tool calls, some of which cannot be undone. To control theory the model is an opaque feedback controller, replaced every few months, whose closed loop cannot be analyzed directly. We propose a runtime filter, or shield, between any such controller and its environment that certifies a bound on the probability of reaching an unsafe condition during a bounded task. Given a predicate abstraction and candidate successor sets, everything else comes from logs. The loop becomes a finite controlled Markov chain, the estimated kernel is surrounded by a distribution-free confidence set, and robust reach--avoid value iteration runs over it. The certificate is probably approximately correct and uniform over controllers respecting the filter, so one model may replace another without recomputation. In all tested datasets the true risk stays within certificate and budget; in a file-system sandbox the filter stays sound against a hazard absent from the logs, where a plug-in baseline certifies safety and fails; and it holds seven controllers, six built on open-weights models, to zero unsafe outcomes in 1,050 episodes.
\end{abstract}

\begin{keywords}
Data-driven control, learning for control, Markov processes, reachability analysis, safety filters.
\end{keywords}

\section{Introduction}

A new kind of autonomous system is asking for the control community's attention. Agents built on large language models~\cite{yao2023react,schick2023toolformer} take a goal in plain language and pursue it by picking a tool call, seeing what comes back and picking again until done. To a control engineer, the model is a feedback controller and the surrounding software the plant.

Two things make the loop awkward for the classical toolbox. First, some actions are irreversible. A deleted database or a wired payment cannot be taken back, so what must be bounded is the probability one is ever triggered before the task ends, a finite-horizon reach--avoid requirement. Second, the controller is opaque and replaced within months, so any argument tied to its identity is obsolete almost as soon as written.

Together they rule out most existing machinery. Barrier-function and predictive safety filters~\cite{ames2019cbf,wabersich2021safetyfilter,baheri2026ensemble} and Hamilton--Jacobi reachability~\cite{bansal2017hj} start from a dynamical model, supervisory control~\cite{ramadge1987supervisory} gives only logical guarantees, and policy-specific verification must be redone when the controller changes~\cite{alshiekh2018shielding}. What agents do offer is data. They all log state, action and outcome, inviting a data-driven treatment as in~\cite{depersis2020data}.

So we ask whether any LLM agent can be wrapped in a thin filter, built from logs with no known transition probabilities or training, guaranteeing a chosen ceiling on the probability of an unsafe outcome within a bounded task, certified once for every controller respecting it. Given predicates making the loop Markov and each call's possible outcomes, it can.

\noindent\textbf{Contributions.} This letter makes the following contributions.
\begin{enumerate}
\item \emph{A plant model for an agent.} A few engineer-chosen Boolean facts turn the agent loop into a finite controlled Markov chain with a reach--avoid specification (Sec.~\ref{sec:setup}).
\item \emph{A certificate from ordinary logs.} A robust reach--avoid bound on the risk of irreversible harm, computed over a distribution-free confidence set, is valid with chosen confidence for \emph{every} controller the filter admits (Sec.~\ref{sec:cert}).
\item \emph{A synthesis procedure.} It seeks the least restrictive filter meeting the risk budget (Sec.~\ref{sec:synth}).
\item \emph{Validation.} We check the certificate against ground truth, compare our filter with four alternatives, and test one filter, built from a scripted agent's logs, with that agent and three open-weights language models~\cite{qwen2024technical} (Sec.~\ref{sec:exp}).
\end{enumerate}
\noindent Proofs of all results are given in the Appendix.

\noindent\textbf{Relation to data-driven robust MDPs and shielding.}
Suilen \emph{et al.}~\cite{suilen2022robust} maintain Bayesian intervals on an uncertain MDP; Badings \emph{et al.}~\cite{badings2023robust} lift PAC intervals to an interval MDP; Reed and Lahijanian~\cite{reed2025shielding} build one by deep-kernel regression, computing a maximally permissive safe-policy set. Closest, Hamel-De le Court \emph{et al.}~\cite{hameldelecourt2026robust} shield robust MDPs against a safety-LTL threshold; on the RMDP their shield is sound and optimal, admitting exactly the safe policies. We claim no novelty in uniformity over controllers, which shielding supplies by construction. What differs is the data, as most acquire samples through a scheme they control, and where a passive stream is admitted~\cite{suilen2022robust} the uncertainty is Bayesian, not distribution-free. Ours is a narrower, distribution-free construction for \emph{ordered, passively collected} traces with adaptive, uncontrolled visitation and explicit vacuous treatment of under-covered rows, feeding a robust finite-horizon shield. Ray~\cite{ray2026enforced} concurrently asks what a gate can enforce at all,
a question of enforcement rather than certification.
Table~\ref{tab:related} sets these beside this letter.

\begin{table*}[t]
\centering\footnotesize
\setlength{\tabcolsep}{4pt}
\scriptsize\setlength{\tabcolsep}{1.7pt}
\begin{tabular}{@{}llllll@{}}
\toprule
Work & Data & Uncertainty set & Guarantee ranges over & Spec. & Guarantee \\
\midrule
Suilen \emph{et al.}~\cite{suilen2022robust}
 & interactive or streamed & Bayesian intervals & the synthesized policy & temporal logic & credible \\
Badings \emph{et al.}~\cite{badings2023robust}
 & prescribed samples & PAC intervals (iMDP) & the synthesized policy & reach--avoid & PAC \\
Reed and Lahijanian~\cite{reed2025shielding}
 & queried samples & kernel intervals (iMDP) & a maximal safe-policy set & safety & probabilistic \\
Hamel-De le Court \emph{et al.}~\cite{hameldelecourt2026robust}
 & simulator, per-pair budget & Clopper--Pearson & every compliant policy; all safe admitted & safety LTL & PAC, sound + optimal \\
\midrule
This letter & \textbf{passive, adaptively visited logs} & distribution-free $L_1$ balls
 & every admissible controller & reach--avoid, \halt & PAC, distribution-free \\
\bottomrule
\end{tabular}
\caption{\textbf{This letter and the closest prior work.} Ours differs mainly in its passive, adaptively visited logs and distribution-free uncertainty set.}
\label{tab:related}
\end{table*}

\section{The Agent--Environment Loop as a Controlled Markov Chain}
\label{sec:setup}

This section models the agent loop as a finite controlled Markov chain with
a reach--avoid specification.

\subsection{From transcripts to a finite state machine}
The controller is a language model whose natural state is the whole transcript $h_t\in\mathcal{H}$, every prompt, call and reply so far, an unbounded object. We replace it with a short
list of Boolean \emph{predicates} chosen for the application, such as ``holds
write capability,'' ``inside a protected directory,'' ``an irreversible call has
executed.'' The map $\varphi:\mathcal{H}\to\Sset$ sends a transcript to the
truth values of these predicates, so the abstract state $s_t=\varphi(h_t)$
lives in a finite set with $|\Sset|=n$. Actions $a\in\Aset$ are a finite
catalog of tool-call \emph{types}, $|\Aset|=m$, one of which, $\halt$, hands
control to a human and ends the episode safely. An action may refine a tool
name by safety-relevant argument classes; any dependence left out of $a$ must
appear in the abstract state instead.

\begin{assumption}[Abstract Markovness]\label{ass:markov}
There is a kernel $P$ with $\Prob[s_{t+1}=s'\mid h_t,a_t]=P(s'\mid s_t,a_t)$, $s_t=\varphi(h_t)$, at every $t$ and for \emph{any} history-dependent $a_t$. Hence, given the current abstract state and action, the next is independent of the rest of $h_t$, with a kernel independent of $t$.
\end{assumption}

That is, once the predicates and the action are known, the rest of the transcript says nothing more about where the system goes next, even though the controller may have used it to choose the action. The analysis rests on this assumption, so the predicates must be rich enough to make it hold; Experiment~4 tests it on logs, and Prop.~\ref{prop:eta} bounds the cost of a violation.

\begin{assumption}[Irreversibility]\label{ass:absorb}
There is an unsafe set $\Su\subset\Sset$ (a harmful/irreversible call has occurred); each $s\in\Su$ is absorbing, $P(s\mid s,a)=1$ for all $a$.
\end{assumption}

\begin{assumption}[Safe stop]\label{ass:halt}
A distinguished sink $\bot\in\Sset$ is absorbing and safe. From every non-unsafe, non-sink state, $\halt$ leads to it,
\[
P(\bot\mid s,\halt)=1\qquad\text{for } s\notin\Su\cup\{\bot\}.
\]
(Unsafe states remain absorbing under \emph{every} action, including $\halt$, by A\ref{ass:absorb}.)
\end{assumption}

\begin{assumption}[Controller-independence]\label{ass:indep}
$P$ depends only on $(s,a)$, not on which controller selected $a$.
\end{assumption}

A\ref{ass:absorb} and A\ref{ass:halt} say only that harm cannot be undone
and stopping is always possible. A\ref{ass:indep} lets one certificate serve every controller, since a file system answers ``delete'' the same way whoever issued it, provided the action or the predicates record every argument that matters, such as the path. Environments that adapt to the agent's identity are excluded.

\noindent\textbf{Controller, shield, admissibility.} A \emph{controller}
is any nonanticipative rule for the next action, a family of kernels
$\pi_t(\cdot\mid h_t)\in\Delta(\Aset)$ that may depend on the entire transcript,
vary with time and randomize. A language model is one. A \emph{shield}
is a map $\Dshield:\Sset\to2^{\Aset}$ listing the actions permitted in each
abstract state, always including $\halt$. A controller is \emph{admissible},
written ${\pi\rhd\Dshield}$, if $\operatorname{supp}{\pi_t(\cdot\mid h_t)}\subseteq\Dshield(\varphi(h_t))$ for all $t$, that is, if every action it \emph{executes} is permitted. The shield constrains only what is executed; how a forbidden proposal is handled is a separate choice, and Sec.~\ref{sec:exp} uses three, namely restricting the agent to permitted actions (the random and adversarial controllers of Experiment~1), replacing the proposal by $\halt$, which ends the episode (its competent controller), and blocking it and asking again, with $\halt$ on the fourth consecutive refusal (the language models of Experiment~3). Each makes the agent plus its rule a nonanticipative controller
whose executed action lies in $\Dshield(\varphi(h_t))$. A blocked proposal
causes no transition, does not advance the step count, which counts executed
calls, and leaves $\varphi(h_t)$ unchanged, as the predicates record facts about
the environment. That composite is admissible and is the $\pi$ every result
below concerns, immediate $\halt$ substitution being the case of no second
attempt. The shield is thus a supervisory
controller~\cite{ramadge1987supervisory,alshiekh2018shielding} between a
stochastic plant and a black-box decision maker.

\subsection{The finite-horizon reach--avoid specification}
The quantity we certify is the probability that the loop ever reaches an
unsafe state within the task horizon. Fix a horizon $T$ and a start $h_0$, $s_0=\varphi(h_0)$. For admissible $\pi$ and kernel $P$ the \emph{unsafe-reach probability} is
\begin{equation}\label{eq:reachprob}
V^{\pi}_T(s_0;P) \;:=\; \Prob\!\left[\,\exists\, t\le T:\; s_t\in\Su \,\middle|\, h_0,\pi,P\right].
\end{equation}
Since unsafe states are absorbing this is the probability of \emph{ever}
entering $\Su$ within $T$ steps. Not knowing which controller will run, we
control the worst case over all of them,
\begin{equation}\label{eq:worstcase}
\Vstar_T(s;P,\Dshield) \;:=\; \sup_{\pi\,\rhd\,\Dshield} V^{\pi}_T(s;P),
\end{equation}
a supremum over every nonanticipative admissible controller, which
collapses to a state-feedback recursion.

\begin{lemma}[Worst-case reach--avoid value]\label{lem:worstcase}
Under A\ref{ass:markov}--A\ref{ass:halt}, for every $s$ and $k$ the worst-case value $\Vstar_k(s;P,\Dshield)$ is produced by $\Vstar_0=\one_{\Su}$ and, for $s\notin\Su\cup\{\bot\}$,
\begin{equation}\label{eq:truevi}
\Vstar_k(s)=\max_{a\in\Dshield(s)}\ \sum_{s'}P(s'\mid s,a)\,\Vstar_{k-1}(s'),
\end{equation}
with $\Vstar_k\!\equiv\!1$ on $\Su$, $\Vstar_k(\bot)\!=\!0$. The supremum in~\eqref{eq:worstcase} is attained by a \emph{deterministic Markov} policy $a^\star_t(s)\in\arg\max_{a\in\Dshield(s)}\inner{P_a(s)}{\Vstar_{T-t-1}}$, $P_a(s):=P(\cdot\mid s,a)$.
\end{lemma}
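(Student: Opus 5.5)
Write $W_k$ for the function produced by the recursion~\eqref{eq:truevi} with the stated boundary values. The plan is to prove, by induction on $k$, the two-sided statement $\Vstar_k(s;P,\Dshield)=W_k(s)$, and to identify the greedy deterministic Markov policy as a maximizer. The one subtlety is that controllers are history dependent while the recursion is over abstract states. So I will prove a stronger claim that holds uniformly over histories. For every $k\le T$, every transcript $h$ with $\varphi(h)=s$, and every admissible $\pi$ started from $h$ (time-shifted if necessary), the probability of entering $\Su$ within $k$ further steps is at most $W_k(s)$. Taking the supremum over $\pi$ then gives $\Vstar_k\le W_k$.

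\textbf{Upper bound by induction.} The base case $k=0$ is immediate: the reach probability is $\one_{\Su}(s)$. The boundary values for $k\ge1$ also follow directly. On $\Su$ the value is $1$, since the chain is already unsafe. At $\bot$ the value is $0$, since $\bot$ is absorbing and safe by A\ref{ass:halt}, so $\Su$ is never entered.

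For $s\notin\Su\cup\{\bot\}$, condition on the first executed action $a_0\sim\pi_0(\cdot\mid h)$, whose support lies in $\Dshield(s)$ by admissibility. Condition next on the resulting transcript $h_1$. By A\ref{ass:markov}, and because $\Su$-reach is absorbing (A\ref{ass:absorb}), the event ``reach within $k$'' equals ``$s_1\in\Su$, or reach within $k-1$ from $h_1$''. The continuation of $\pi$ from $h_1$ is again an admissible nonanticipative controller, so the inductive hypothesis bounds its conditional reach probability by $W_{k-1}(s_1)$. This bound also covers $s_1\in\Su$, where $W_{k-1}=1$.

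Next use A\ref{ass:markov} in its history-dependent form: $\Prob[s_1=s'\mid h,a_0]=P(s'\mid s,a_0)$ regardless of how $a_0$ was chosen. This yields
\[
\sum_a\pi_0(a\mid h)\sum_{s'}P(s'\mid s,a)\,W_{k-1}(s') .
\]
That is an average over $a\in\Dshield(s)$ and is therefore at most the maximum over $a\in\Dshield(s)$, which is $W_k(s)$. Blocked proposals cause no transition and do not advance the count, so they are invisible to this argument: it concerns only executed actions.

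\textbf{Attainment.} Define $a^\star_t(s)$ as in the statement; $\halt\in\Dshield(s)$ guarantees the arg max is over a nonempty finite set. This policy is admissible and Markov. Under it the abstract process is a time-inhomogeneous Markov chain with kernel $P(\cdot\mid s,a^\star_t(s))$, by A\ref{ass:markov}. Repeating the one-step decomposition with equalities in place of inequalities shows that its reach probability with $k$ steps remaining is exactly $W_k$, because the greedy action attains the max at every step. Together with the upper bound this gives $\Vstar_k=W_k$, with the supremum attained.

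\textbf{Main obstacle.} The expected hard step is making the conditioning rigorous for arbitrary history-dependent randomized controllers. Two things need care: that the continuation after $h_1$ is itself admissible with horizon $k-1$, and that A\ref{ass:markov} applies even though $a_0$ depends on the full $h$. I would handle this by writing the reach probability via the tower property on the filtration generated by $h_t$. The induction hypothesis is then stated pointwise in $h$, not merely in $s$, which is exactly why the strengthened claim is needed.
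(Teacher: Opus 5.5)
Your proposal is correct and follows essentially the paper's argument: induction on $k$ pointwise over histories, conditioning on the first executed action, using A\ref{ass:markov} and admissibility to bound the $\pi_t(\cdot\mid h)$-average by the maximum over $\Dshield(s)$, with the greedy deterministic Markov policy attaining equality. The paper writes the successor history as $h\oplus(a,s')$, whereas you condition on the full next transcript $h_1$, which can carry more than $(a,s')$, and you spell out the attainment step the paper only asserts; both are refinements of the same proof, not a different route.
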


Equation~\eqref{eq:truevi} is ordinary reach--avoid value
iteration~\cite{summers2010verification}. At each state the worst controller
picks the permitted action with the highest chance of eventual harm, and no
history-dependent or randomized controller can do worse. With $P$ known this
would finish the problem. But $P$ is unknown, and the rest of this letter certifies an upper bound on~\eqref{eq:worstcase} from logs.

\section{Data-Driven Robust Reach--Avoid Certificate}
\label{sec:cert}

We replace the unknown kernel by a confidence set built from the logs, the analogue of an uncertainty set in robust control, and run value iteration against its worst kernel; the result bounds~\eqref{eq:worstcase} with high probability.

\subsection{Estimating the kernel from adaptively collected logs}
The data are $M$ logged episodes of abstract state, action and next state. Each pair $(s,a)$, a \emph{row} of the transition matrix $P_a$, is visited $N(s,a)$ times; $X^{s,a}_j$ is the successor recorded at the $j$-th visit, and the natural estimate of $P(\cdot\mid s,a)$ is their empirical distribution. The difficulty is that the logs are collected in closed loop. As in closed-loop identification, which rows are visited, and how often, depends on earlier outcomes. Hence $N(s,a)$ is random and correlated with the very successors being averaged, and a confidence bound for a sample of fixed size cannot simply be evaluated at $N(s,a)$.

We avoid this by construction. Before any data are seen, the protocol commits to a per-row budget $n_0(s,a)$, and the estimator uses \emph{exactly} the first $n_0(s,a)$ successors at each row, discarding the surplus. Thus $\Phat(s'\mid s,a)=N_0(s,a,s')/n_0(s,a)$ with truncated counts $N_0(s,a,s')=\sum_{j\le n_0}\one[X^{s,a}_j=s']$, and $\Phat_a(s):=\Phat(\cdot\mid s,a)$. The count $N(s,a)$ only shows whether the budget was met; a row that falls short takes the vacuous set below, which contains every kernel, so a missed budget loosens the certificate but never invalidates it. The budget must be fixed from design constants before any log is examined, since choosing it after seeing the counts would reintroduce that dependence; for a retrospective log a time-uniform confidence sequence or a holdout is safer.

We also assume a mild structural prior. For each row a \emph{candidate
successor set} $K_{s,a}\supseteq\operatorname{supp}P(\cdot\mid s,a)$ of size
$k_{s,a}$ is known, meaning what a tool call can return, though not with what
probability. Enlarging $K_{s,a}$ is always safe, since the confidence set only
grows and still contains the truth, whereas omitting a possible successor breaks
the guarantee. The size of $K_{s,a}$ affects only how tight the certificate is. The radius grows like $\sqrt{k_{s,a}/n_0}$ with the number of candidate successors,
since $\ln(2^{k_{s,a}}-2)\approx k_{s,a}\ln2$, and $K_{s,a}=\Sset$ roughly doubles it on our grid.

For $N$ i.i.d.\ samples on $k$ outcomes, Weissman \emph{et al.}~\cite{weissman2003inequalities} give an $L_1$ analogue of Hoeffding's inequality,
\begin{equation}\label{eq:weissman}
\Prob\!\left[\,\|\Phat_a(s)-P_a(s)\|_1\ge\eta\,\right]\le (2^{k}-2)\,e^{-N\eta^2/2}.
\end{equation}
Setting the right-hand side to $\delta/R$ (a union bound over the $R$ rows) and solving for $\eta$ gives the per-row radius
\begin{equation}\label{eq:beta}
\beta_{s,a}=\sqrt{\tfrac{2}{n_0(s,a)}\,\ln\!\Big(\tfrac{(2^{k_{s,a}}-2)\,R}{\delta}\Big)},
\end{equation}
and, for a stochastic row (one with $k_{s,a}\ge2$ candidate successors), the confidence set (an $L_1$ ball intersected with the simplex)
\begin{equation}\label{eq:cset}
\begin{aligned}
\Cset_{s,a}=\big\{q\in\simplex:\ &\|q-\Phat_a(s)\|_1\le\beta_{s,a},\\[-1pt]
&\operatorname{supp}q\subseteq K_{s,a}\big\}.
\end{aligned}
\end{equation}
A \emph{deterministic} row, $k_{s,a}=1$, has nothing to learn. The factor
$2^{k_{s,a}}-2$ vanishes, $\beta_{s,a}=0$ and $\Cset_{s,a}=\{\delta_{s'}\}$, as
for $\halt$. A row with $N(s,a)<n_0(s,a)$ takes the vacuous set $\Delta(K_{s,a})$.

Inequality~\eqref{eq:weissman} assumes a fixed sample size. The next lemma, the core of the construction, shows that the committed budget keeps~\eqref{eq:beta} valid for closed-loop logs. With probability at least $1-\delta$, every confidence set contains its true row (event $E$). The lemma also covers an approximately Markov abstraction.

\begin{lemma}[Uniform validity]\label{lem:valid}
Let $Q_j$, supported in $K_{s,a}$, be the conditional law of the successor at the $j$-th visit to
$(s,a)$ given the past and $\Qbar_a(s):=n_0^{-1}\sum_{j\le n_0}Q_j$. With
probability at least $1-\delta$ over the draw of the logs,
$\Qbar_a(s)\in\Cset_{s,a}$ for \emph{all} $(s,a)$ at once. Under
A\ref{ass:markov} every $Q_j$ equals $P_a(s)$, so this reads
$P_a(s)\in\Cset_{s,a}$.
\end{lemma}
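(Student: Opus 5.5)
The plan is to rerun the Weissman argument~\eqref{eq:weissman} in martingale form, along the visit index of each row, so that the adaptive visitation never enters as a random sample size. Fix a stochastic row $(s,a)$ with $k=k_{s,a}\ge2$. Let $\tau_j$ be the (random) global time of the $j$-th visit to $(s,a)$, and let $\mathcal{G}_j$ be the $\sigma$-field of the whole log up to and including the action at $\tau_j$. By definition, $Q_j$ is the conditional law of $X^{s,a}_j$ given $\mathcal{G}_j$, and it is supported in $K_{s,a}$.

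\textbf{Fictitious extension.} To avoid conditioning on $\{N(s,a)\ge n_0\}$, I will extend the sequence on the event $\{N(s,a)<j\}$. There I let $X^{s,a}_j$ be a draw from an arbitrary fixed law on $K_{s,a}$, independent of everything else, and let $Q_j$ be that law. With this convention, $(X^{s,a}_j)_{j\le n_0}$ is always defined and adapted to a filtration $(\mathcal{G}_j)$ in which each $Q_j$ is $\mathcal{G}_j$-measurable. On $\{N(s,a)\ge n_0\}$, the extended and the real truncated quantities $\Phat_a(s)$ and $\Qbar_a(s)$ coincide. On the complementary event, the row takes the vacuous set $\Delta(K_{s,a})$, which contains $\Qbar_a(s)$ automatically because every $Q_j$ is supported in $K_{s,a}$. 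Deterministic rows are also trivial: $Q_j=\delta_{s'}$ for the single candidate, and $\Cset_{s,a}=\{\delta_{s'}\}$. So it suffices to bound $\Prob[\|\Phat_a(s)-\Qbar_a(s)\|_1>\beta_{s,a}]$ for the extended sequence.

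\textbf{Reduction to subsets.} Both distributions are supported in $K_{s,a}$, so
\[
\|\Phat-\Qbar\|_1 = 2\max_{A\subseteq K_{s,a}}\bigl(\Phat(A)-\Qbar(A)\bigr),
\]
and the subsets $A=\emptyset$ and $A=K_{s,a}$ contribute zero. This leaves $2^k-2$ subsets. For each such $A$, define
\[
D_j := \one[X^{s,a}_j\in A]-Q_j(A).
\]
This is a martingale difference sequence with respect to $(\mathcal{G}_{j+1})$, and each $\one[X^{s,a}_j\in A]$ lies in an interval of length one whose endpoints are predictable. Since $n_0$ is a deterministic design constant, Azuma--Hoeffding applies with a fixed horizon and gives
\[
\Prob\Bigl[\tfrac1{n_0}\textstyle\sum_{j\le n_0} D_j\ge \eta/2\Bigr]\le e^{-n_0\eta^2/2}.
\]
A union bound over the $2^k-2$ subsets then yields the Weissman bound with $\Qbar_a(s)$ in place of $P_a(s)$. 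Substituting $\eta=\beta_{s,a}$ from~\eqref{eq:beta} makes the failure probability at most $\delta/R$ for this row.

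\textbf{Union over rows and the Markov case.} A union bound over the $R$ rows gives the event $E$ with probability at least $1-\delta$. Under A\ref{ass:markov}, the conditional law of the successor given the full history at $\tau_j$ is $P_a(s)$, which does not depend on $j$. Hence every $Q_j=P_a(s)$ on the real visits, and $\Qbar_a(s)=P_a(s)$ on $\{N(s,a)\ge n_0\}$. This is the final sentence of the lemma.

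\textbf{Main obstacle.} The step needing the most care is the martingale bookkeeping. One must check that $\tau_j$ is a stopping time and that the visit-indexed $\sigma$-fields form a filtration. One must also check that the fictitious extension does not change the event on which $\Cset_{s,a}$ is non-vacuous, so that the estimate is never conditioned on the data-dependent event $\{N(s,a)\ge n_0\}$. I would verify this first by writing the bad event for the row as a subset of the bad event for the extended sequence, whose probability is controlled unconditionally.
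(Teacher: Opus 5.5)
Your proposal is correct and follows essentially the same argument as the paper: Azuma--Hoeffding applied to the range-one martingale differences $\one[X^{s,a}_j\in A]-Q_j(A)$, the identity $\tfrac12\|p-q\|_1=\max_A(p(A)-q(A))$ to reduce to the $2^{k_{s,a}}-2$ nontrivial subsets, and union bounds over subsets and rows, with under-covered and deterministic rows handled trivially. The only difference is bookkeeping. You re-index by visits and pad beyond $N(s,a)$ with fictitious independent draws so the horizon is exactly $n_0$. The paper instead stays in global time with the predictable indicator $I_t$ and bounds the joint event $\{N(s,a)\ge n_0,\ \Phat_a(s)(A)-\Qbar_a(s)(A)\ge\eta/2\}$ directly through a conditional-Hoeffding supermartingale; both avoid conditioning on the coverage event.
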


\noindent\emph{Proof sketch} (see the Appendix). Whether a visit is among the first $n_0$ is fixed before its successor is drawn, so for each $A\subseteq K_{s,a}$ the terms $\one[X^{s,a}_j\in A]-Q_j(A)$ are martingale differences of range one; Azuma--Hoeffding and a union bound over the nontrivial sets $A$ and the rows give~\eqref{eq:beta} without independence.
\subsection{Robust value iteration}
Lemma~\ref{lem:worstcase} took the worst case over controllers for a known
kernel. We now also take the worst case over the kernels the data cannot rule
out. Define $\Vbar_0=\one_{\Su}$ and, for $s\notin\Su\cup\{\bot\}$,
\begin{equation}\label{eq:robustvi}
\Vbar_k(s)=\max_{a\in\Dshield(s)}\ \sigma\!\big(\Cset_{s,a};\,\Vbar_{k-1}\big),
\end{equation}
with $\Vbar_k\!\equiv\!1$ on $\Su$, $\Vbar_k(\bot)=0$, and the support function
\begin{equation}\label{eq:support}
\sigma(\Cset_{s,a};V):=\max_{q\in\Cset_{s,a}}\ \inner{q}{V},
\end{equation}
the largest one-step value any kernel in the confidence set could
produce. Equation~\eqref{eq:robustvi} is robust dynamic programming~\cite{iyengar2005robust} for reach--avoid; its only nontrivial step,~\eqref{eq:support}, is a linear program over an $L_1$ ball $\cap$ simplex, solved in closed form by shifting mass toward the riskiest successors.

\begin{proposition}[Inner oracle]\label{prop:oracle}
Let $\hat p=\Phat_a(s)$ and order the support so $V$ is nonincreasing. Then~\eqref{eq:support} equals $\inner{\hat p}{V}$ plus the gain from moving at most $\min(\beta_{s,a}/2,1)$ of mass from the smallest-$V$ entries of $\hat p$ to the largest, capped by donor and receiver capacity and stopped when no strictly value-increasing transfer remains; the moved mass may be zero, as at $\hat p=(1,0)$, $V=(1,0)$. It costs $\mathcal{O}(k_{s,a}\log k_{s,a})$.
\end{proposition}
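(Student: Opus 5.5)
I would prove the proposition by reducing the linear program \eqref{eq:support} to an exchange (transport) problem and then solving it greedily. Index the candidate set $K_{s,a}$ by $1,\dots,k$, sorted so that $V_1\ge V_2\ge\dots\ge V_k$, and write any feasible $q\in\Cset_{s,a}$ as $q=\hat p+d$. Feasibility means three things: $\sum_i d_i=0$, $d_i\ge-\hat p_i$, and $\|d\|_1\le\beta:=\beta_{s,a}$. Splitting $d=d^+-d^-$, both parts carry the same mass $\mu=\sum_i d_i^+=\sum_i d_i^-$, and the budget reads $2\mu\le\beta$. Since also $\mu\le\sum_i\hat p_i=1$, the movable mass is capped at $\min(\beta/2,1)$. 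The objective becomes $\inner{\hat p}{V}+\sum_i d^+_iV_i-\sum_i d^-_iV_i$. So the problem is exactly: choose $\mu\le\min(\beta/2,1)$, a receiver distribution $d^+$ of mass $\mu$, and a donor distribution $d^-\le\hat p$ of mass $\mu$, maximizing the gain. Receivers are effectively capped as well, at $1-\hat p_i$, and this cap is never binding in a way that matters, since $q_i\le1$ holds automatically.

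Next I would show that the greedy choice is optimal for fixed $\mu$. Put all received mass on index $1$ (the largest $V$). Take donated mass from the smallest-$V$ entries first, i.e.\ drain $\hat p_k$, then $\hat p_{k-1}$, and so on until $\mu$ is used up. This is a standard exchange argument. Any optimal $d^+$ that places mass on some $i>1$ can move that mass to index $1$ without lowering the value. Any $d^-$ that draws from index $j$ while a lower-$V$ index $j'>j$ still has unused $\hat p_{j'}$ can swap these without lowering the value. Alternatively, one can note that the LP is a fractional knapsack/transport problem with a Monge-ordered cost $V_i-V_j$.

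The step needing most care is the stopping rule, which covers ties and the zero-transfer case. Let $g(\mu)$ denote the optimal gain for budget $\mu$. It is piecewise linear and concave. Its slope on each piece is $V_1-V_j$, where $j$ is the donor currently being drained, and these slopes are nonincreasing as $j$ decreases. Hence one should keep increasing $\mu$ while the slope is strictly positive, and stop at the first donor with $V_j=V_1$, because further transfer gains nothing. This justifies ``stopped when no strictly value-increasing transfer remains''. It also covers the example $\hat p=(1,0)$, $V=(1,0)$: there the only donor mass sits at the maximizer, so the gain is $0$ and the moved mass is zero. I would also check the degenerate rows: deterministic rows with $\beta=0$ give $\sigma=\inner{\hat p}{V}$, and vacuous rows with $\Delta(K_{s,a})$ correspond to $\beta\ge2$, which gives $\sigma=\max_{i\in K}V_i$ as it should.

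Finally, the complexity. Sorting the $k_{s,a}$ entries of $V$ costs $\mathcal{O}(k\log k)$. The greedy sweep over donors is then a single $\mathcal{O}(k)$ pass, so the total cost is $\mathcal{O}(k_{s,a}\log k_{s,a})$.
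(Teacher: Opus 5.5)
Your proposal is correct and follows essentially the same route as the paper. You write $q=\hat p+d$, bound the moved mass by $\min(\beta_{s,a}/2,1)$, fill the largest-$V$ entry from the smallest-$V$ donors, note that the gain is concave, piecewise-linear and nondecreasing so transfer stops once the slope reaches zero, and charge the cost to sorting; your explicit exchange argument and your remark that the receiver cap $q_1\le 1$ is implied by the simplex constraints are harmless refinements, since your ``no donor mass below the maximizer'' and the paper's ``receiver capacity exhausted'' describe the same zero-gain case.
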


\subsection{The certificate}
\begin{theorem}[PAC controller-agnostic safety]\label{thm:main}
Under A\ref{ass:markov}--A\ref{ass:indep}, the support condition $\operatorname{supp}P(\cdot\mid s,a)\subseteq K_{s,a}$ and the sets~\eqref{eq:cset}, with probability at least $1-\delta$ over the draw of the logs, for every $s$, every $k\le T$ and \emph{every} admissible controller $\pi\rhd\Dshield$,
\begin{equation}\label{eq:sandwich}
V^{\pi}_k(s;P)\ \le\ \Vstar_k(s;P,\Dshield)\ \le\ \Vbar_k(s).
\end{equation}
In particular, if the shield satisfies $\Vbar_T(s_0)\le\varepsilon$, then
\begin{equation}\label{eq:guarantee}
\Prob_{\mathrm{data}}\!\Big[\ \sup_{\pi\,\rhd\,\Dshield} V^{\pi}_T(s_0;P)\ \le\ \varepsilon\ \Big]\ \ge\ 1-\delta .
\end{equation}
\end{theorem}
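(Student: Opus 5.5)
The plan is to work throughout on the event $E$ of Lemma~\ref{lem:valid}, which has probability at least $1-\delta$ over the draw of the logs. Under A\ref{ass:markov}, Lemma~\ref{lem:valid} says that on $E$ every true row lies in its set: $P_a(s)\in\Cset_{s,a}$ for all $(s,a)$ simultaneously. This holds for stochastic rows, for deterministic rows (where $\Cset_{s,a}=\{\delta_{s'}\}$ and the support condition forces $P_a(s)=\delta_{s'}$), and for under-covered rows, whose vacuous set $\Delta(K_{s,a})$ contains $P_a(s)$ by the support condition. Everything after this is deterministic and uses no further data. Because $E$ does not depend on $\pi$, the bound will hold for every admissible controller at once.

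The left inequality in~\eqref{eq:sandwich} is immediate from the definition~\eqref{eq:worstcase}, since $\pi\rhd\Dshield$ is one of the controllers in the supremum. For the right inequality, I first invoke Lemma~\ref{lem:worstcase}, which uses A\ref{ass:markov}--A\ref{ass:halt} with A\ref{ass:indep} guaranteeing that $P$ is the same kernel whatever controller runs. It identifies $\Vstar_k$ with the value-iteration recursion~\eqref{eq:truevi}. I then prove $\Vstar_k(s)\le\Vbar_k(s)$ for all $s$ by induction on $k\le T$.

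The base case is $\Vstar_0=\Vbar_0=\one_{\Su}$. For the inductive step, the absorbing states agree: both values equal $1$ on $\Su$ and $0$ at $\bot$. For $s\notin\Su\cup\{\bot\}$ and each $a\in\Dshield(s)$, I use two facts. First, $P_a(s)$ has nonnegative entries, so the induction hypothesis gives $\inner{P_a(s)}{\Vstar_{k-1}}\le\inner{P_a(s)}{\Vbar_{k-1}}$. Second, on $E$ we have $P_a(s)\in\Cset_{s,a}$, so $\inner{P_a(s)}{\Vbar_{k-1}}\le\sigma(\Cset_{s,a};\Vbar_{k-1})$ by~\eqref{eq:support}. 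Maximizing over the same set $\Dshield(s)$ on both sides compares~\eqref{eq:truevi} with~\eqref{eq:robustvi} and closes the induction. Rows taking $\halt$ are deterministic into $\bot$, so they contribute $0$ on both sides.

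The consequence~\eqref{eq:guarantee} then follows directly. On $E$, $\sup_{\pi\rhd\Dshield}V^\pi_T(s_0;P)=\Vstar_T(s_0)\le\Vbar_T(s_0)\le\varepsilon$, and $\Prob(E)\ge1-\delta$.

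The main obstacle is making sure the event is controller-uniform and that the order of quantifiers is right. The sets $\Cset_{s,a}$, and hence $\Vbar$, are random functions of the logs, which may themselves have been generated by some other (logging) controller. The theorem, by contrast, quantifies over deployment controllers. The point to stress is that Lemma~\ref{lem:valid} already handles the adaptive logging, and A\ref{ass:indep} ensures the deployment dynamics use the same $P$ that $E$ certifies. The comparison is therefore a pathwise statement on $E$ and needs no further union bound over controllers, time steps $k$, or states.

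A minor check is that the shield $\Dshield$ must be fixed independently of, or at least uniformly over, the data. Because the induction holds on $E$ for every map $\Dshield$ simultaneously, a shield synthesized from the same logs is still covered.
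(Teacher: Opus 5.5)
Your proposal is correct and follows essentially the same route as the paper's proof: work on the event $E$ of Lemma~\ref{lem:valid}, get the left inequality from the definition of the supremum, and prove $\Vstar_k\le\Vbar_k$ by induction, using the recursion of Lemma~\ref{lem:worstcase}, nonnegativity of $P_a(s)$, and $P_a(s)\in\Cset_{s,a}$ on $E$. Your extra checks on deterministic and under-covered rows, and on uniformity over data-dependent shields, match what the paper establishes in Lemma~\ref{lem:valid} and Corollary~\ref{cor:adaptive}.
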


Hence, with confidence $1-\delta$ over the logs, any controller behind the shield reaches an unsafe state with probability at most $\Vbar_T(s_0)$, computed from the logs and the shield, not from $P$. As the bound is \emph{uniform} over admissible controllers, a new model can replace the deployed one without recomputation. The certificate belongs to the \emph{interface}, not the model.

\begin{corollary}[Sample complexity and rate]\label{cor:rate}
Let $\mathcal{R}$ be the rows an in-shield trajectory from $s_0$ can visit and $n_{\min}(M):=\min_{(s,a)\in\mathcal{R}}n_0(s,a)$. Suppose $n_{\min}(M)=\Omega(M)$, so $\beta_{s,a}=\mathcal{O}(1/\sqrt{M})$ uniformly on $\mathcal{R}$. On the event $E$ of Lemma~\ref{lem:valid} and on the event that every row of $\mathcal{R}$ meets its committed budget,
\[
\Vbar_T(s_0)\le\Vstar_T(s_0;P,\Dshield)+\mathcal{O}\big(T/\sqrt{M}\big).
\]
Coverage is essential here, since a relevant row missing its budget stays vacuous and the gap need not shrink.
\end{corollary}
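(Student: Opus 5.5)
We compare the robust recursion~\eqref{eq:robustvi} with the true recursion~\eqref{eq:truevi} step by step and track the error $e_k:=\|\Vbar_k-\Vstar_k\|_\infty$ over the states reachable in-shield from $s_0$, which are the only states that matter for $\Vbar_T(s_0)$. On the intersection of $E$ from Lemma~\ref{lem:valid} with the coverage event, every row $(s,a)\in\mathcal{R}$ uses the nonvacuous ball~\eqref{eq:cset} and contains $P_a(s)$. Theorem~\ref{thm:main} already gives $\Vbar_k\ge\Vstar_k$, so only the upper direction needs work. On $\Su$ and at $\bot$ both values coincide, so only transient states contribute.

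\textbf{One-step bound.} Fix a transient reachable $s$ and $a\in\Dshield(s)$. For any $q\in\Cset_{s,a}$ the triangle inequality gives $\|q-P_a(s)\|_1\le\|q-\Phat_a(s)\|_1+\|\Phat_a(s)-P_a(s)\|_1\le2\beta_{s,a}$, where the second term is bounded by $\beta_{s,a}$ on $E$. Because $q$ and $P_a(s)$ both sum to one and $V\in[0,1]$, we can subtract the constant $\tfrac12$ and use Hölder to get $\inner{q-P_a(s)}{V}\le\tfrac12\|q-P_a(s)\|_1\le\beta_{s,a}$. Hence
\[
\sigma(\Cset_{s,a};\Vbar_{k-1})\le\inner{P_a(s)}{\Vbar_{k-1}}+\beta_{s,a}\le\inner{P_a(s)}{\Vstar_{k-1}}+e_{k-1}+\beta_{s,a}.
\]
The last step uses that $P_a(s)$ is a probability vector supported on reachable successors, so the sup-norm error passes through it without growth. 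Taking the maximum over $a\in\Dshield(s)$, and using that a max is $1$-Lipschitz, gives $e_k\le e_{k-1}+\beta_{\max}$, where $\beta_{\max}:=\max_{\mathcal{R}}\beta_{s,a}$.

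\textbf{Unrolling and rate.} Since $e_0=0$, unrolling gives $e_T\le T\beta_{\max}$. By~\eqref{eq:beta}, $\beta_{\max}\le\sqrt{2\ln((2^{k_{\max}}-2)R/\delta)/n_{\min}(M)}$. The hypothesis $n_{\min}(M)=\Omega(M)$ then gives $\beta_{\max}=\mathcal{O}(1/\sqrt M)$, with the constant depending on $k_{\max},R,\delta$. This yields $\Vbar_T(s_0)\le\Vstar_T(s_0)+\mathcal{O}(T/\sqrt M)$.

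\textbf{Main obstacle.} The delicate step is restricting the induction to rows in $\mathcal{R}$. A vacuous row outside $\mathcal{R}$ could make $\Vbar_{k-1}$ large at some state, and we must rule out that this error enters through the support function. The worst-case $q\in\Cset_{s,a}$ is supported in $K_{s,a}$, which may contain states not reachable under $P$, whereas the bound above passed the error through $P_a(s)$. To handle this, I would define $\mathcal{R}$, and the reachable state set, through the candidate supports $K_{s,a}$ (successors the confidence sets can reach), so the set is closed under both $P$ and every $q$. The induction then runs on the sup-norm over that closed set, taken level by level for steps $T-k$. The corollary's closing remark reflects the failure mode: if coverage fails on such a row, $\beta$ is effectively replaced by $2$ and the gap does not shrink.
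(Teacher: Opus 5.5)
Your proof is correct and is essentially the paper's. It uses the same one-step estimate $\sigma(\Cset_{s,a};V)\le\inner{P_a(s)}{V}+\beta_{\max}$ for every $V\in[0,1]^{\Sset}$ (you get it from $\|q-P_a(s)\|_1\le2\beta_{s,a}$, the paper from Prop.~\ref{prop:oracle} together with the event $E$), followed by the same induction $e_k\le e_{k-1}+\beta_{\max}$ over the states reachable under $P$ with exactly $k$ steps remaining.

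Keep that level-wise restriction, but drop the $K_{s,a}$-closure proposed under ``Main obstacle''. Because the estimate holds for every $V\in[0,1]^{\Sset}$, large values of $\Vbar_{k-1}$ on $K_{s,a}\setminus\operatorname{supp}P_a(s)$ are already paid for by the $\beta_{s,a}$ term, and the error propagates only through $P_a(s)$. Enlarging $\mathcal{R}$ through the $K_{s,a}$ would instead demand budgets and coverage on rows that no in-shield trajectory under $P$ visits, and so would prove a weaker statement than the one claimed.
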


\begin{corollary}[Validity under data-dependent synthesis]\label{cor:adaptive}
The event $E=\{P_a(s)\in\Cset_{s,a}\ \forall(s,a)\}$ depends on the data only, and on $E$ the domination $\Vstar_k\le\Vbar_k$ holds for \emph{every} shield at once. Hence~\eqref{eq:guarantee} survives replacing $\Dshield$ by any data-dependent $\widehat{\Dshield}$ with $\Vbar_T(s_0;\widehat{\Dshield})\le\varepsilon$, in particular the output of Alg.~\ref{alg:synth}. No sample splitting is required.
\end{corollary}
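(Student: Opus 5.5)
The plan is to separate the probabilistic content from the deterministic content. All randomness over the logs sits in the event $E=\{P_a(s)\in\Cset_{s,a}\ \forall(s,a)\}$, and Lemma~\ref{lem:valid} already gives $\Prob_{\mathrm{data}}[E]\ge1-\delta$. Here A\ref{ass:markov} makes every $Q_j$ equal to $P_a(s)$, and vacuous or deterministic rows contain the truth trivially under the support condition.

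The key observation is that $E$ is defined without reference to any shield. Its indicator is a function of $\Phat$, the counts and the design constants $n_0,K_{s,a},\delta$ only. A shield enters only afterwards, through the recursion~\eqref{eq:robustvi}. So I will show that on $E$, for an \emph{arbitrary} map $\Dshield':\Sset\to2^{\Aset}$ containing $\halt$, we have $\Vstar_k(\cdot;P,\Dshield')\le\Vbar_k(\cdot;\Dshield')$ for all $k\le T$. This is a deterministic statement, and I prove it by induction on $k$ using Lemma~\ref{lem:worstcase}.

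For $k=0$ both sides equal $\one_{\Su}$. On $\Su$ and at $\bot$ the two recursions also agree, giving $1$ and $0$ respectively. For $s\notin\Su\cup\{\bot\}$ and any $a\in\Dshield'(s)$, I will chain
\[
\inner{P_a(s)}{\Vstar_{k-1}}\le\inner{P_a(s)}{\Vbar_{k-1}}\le\sigma(\Cset_{s,a};\Vbar_{k-1}).
\]
The first inequality uses the induction hypothesis and the nonnegativity of $P_a(s)$. The second uses $P_a(s)\in\Cset_{s,a}$ on $E$ together with the definition~\eqref{eq:support}. Taking the max over $a\in\Dshield'(s)$ on both sides gives $\Vstar_k(s)\le\Vbar_k(s)$.

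Since this pointwise domination holds simultaneously for every shield on the single event $E$, I can then specialize. Let $\omega\in E$ and take $\Dshield'=\widehat{\Dshield}(\omega)$, the shield synthesized from the same data, for instance by Alg.~\ref{alg:synth}. If $\Vbar_T(s_0;\widehat{\Dshield})\le\varepsilon$, then Lemma~\ref{lem:worstcase} identifies $\sup_{\pi\rhd\widehat{\Dshield}}V^\pi_T(s_0;P)$ with $\Vstar_T(s_0;P,\widehat{\Dshield})$, and this is at most $\varepsilon$. Hence
\[
\{\text{the supremum in}~\eqref{eq:guarantee}\text{ for }\widehat{\Dshield}\text{ is}\le\varepsilon\}\supseteq E\cap\{\Vbar_T(s_0;\widehat{\Dshield})\le\varepsilon\}.
\]
When the synthesis is required to meet the budget, this gives~\eqref{eq:guarantee} with probability at least $1-\delta$ and no sample splitting.

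The main subtlety is making sure $E$ really is shield-free, so that no union bound over data-dependent shields is needed. Two ingredients could in principle depend on the shield: the committed budgets $n_0(s,a)$ and the row count $R$ in~\eqref{eq:beta}. I will check that both are fixed over all rows $(s,a)\in\Sset\times\Aset$ before the data are seen, independently of $\Dshield$. If instead $R$ were taken to be the rows reachable under the chosen shield, the argument would break, so I will state explicitly that the union bound in Lemma~\ref{lem:valid} ranges over all rows. A second point is that Lemma~\ref{lem:worstcase} is applied with $P$ fixed and with a shield that is fixed once $\omega$ is fixed, which is legitimate because the controller-side randomness is independent of the log draw.
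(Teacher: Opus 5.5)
Your proposal is correct and follows the paper's proof. Both arguments observe that $E$ never references a shield, reuse the inductive step of Theorem~\ref{thm:main} (which needs only $P_a(s)\in\Cset_{s,a}$) to get $\Vstar_k\le\Vbar_k$ on $E$ for every shield at once, and then specialize pointwise to $\widehat{\Dshield}(\omega)$, with $\Prob[E]\ge1-\delta$ supplied by Lemma~\ref{lem:valid}. Your explicit check that $n_0$ and $R$ are fixed independently of any shield makes precise what the paper leaves implicit. Your remark about controller-side randomness is not needed, because once $\omega$ is fixed, $\sup_{\pi\rhd\widehat{\Dshield}}V^{\pi}_T(s_0;P)$ is a deterministic function of the shield.
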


More logs tighten the bound at the usual $1/\sqrt{M}$ rate if they cover the relevant rows, a condition akin to persistent excitation, and the shield may be chosen from the logs that certify it. The next result bounds the cost of violating A\ref{ass:markov}.

\begin{proposition}[Residual history dependence]\label{prop:eta}
Drop A\ref{ass:markov} and assume instead that at every row the history-conditioned laws vary little, in that ${\|Q_a(h)-Q_a(h')\|_1}\le\eta$ and $\operatorname{supp}Q_a(h)\subseteq K_{s,a}$ for all reachable $h,h'$ with
$\varphi(h)=\varphi(h')=s$. Let $\Vbar^{\eta}$ run~\eqref{eq:robustvi} over sets
$\Cset^{\eta}_{s,a}$ of radius $\beta_{s,a}+\eta$. Then with probability at
least $1-\delta$ over the draw of the logs, every admissible controller obeys
\[
V^{\pi}_T(s_0)\ \le\ \Vbar^{\eta}_T(s_0)\ \le\ \Vbar_T(s_0)+\tfrac{1}{2}T\eta .
\]
\end{proposition}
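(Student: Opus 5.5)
We prove the two inequalities separately. The first is a robust-domination argument driven by Lemma~\ref{lem:valid}. The second is a perturbation bound on the support function~\eqref{eq:support}.

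\emph{Step 1 (the data event).} Without A\ref{ass:markov}, write $Q_a(h)$ for the law of $s_{t+1}$ given $h_t=h$, $a_t=a$. Lemma~\ref{lem:valid} holds with $Q_j$ the conditional law at the $j$-th visit, because its martingale argument never used Markovness. So with probability at least $1-\delta$ we get $\Qbar_a(s)\in\Cset_{s,a}$ for all rows, where $\Qbar_a(s)$ averages $Q_{j}=Q_a(h^{(j)})$ over logged histories with $\varphi(h^{(j)})=s$.

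Take any reachable $h$ with $\varphi(h)=s$. By the hypothesis, $\|Q_a(h)-Q_{j}\|_1\le\eta$ for every $j$. Convexity of the norm then gives $\|Q_a(h)-\Qbar_a(s)\|_1\le\eta$. The triangle inequality gives $\|Q_a(h)-\Phat_a(s)\|_1\le\beta_{s,a}+\eta$, and the support condition holds by assumption. Hence $Q_a(h)\in\Cset^{\eta}_{s,a}$ for every reachable $h$. For vacuous or deterministic rows this is immediate.

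\emph{Step 2 (first inequality).} On this event, induct backward on the remaining horizon $k$. The claim is that for every reachable history $h$ with $\varphi(h)=s$, the conditional probability of reaching $\Su$ within $k$ further steps, under any admissible $\pi$, is at most $\Vbar^{\eta}_k(s)$.

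The base case and the absorbing states $\Su$ and $\bot$ are handled by A\ref{ass:absorb}--A\ref{ass:halt}; unsafe states are still absorbing, as a property of the environment. For the inductive step, condition on $a_t=a\in\Dshield(s)$. The continuation value is $\sum_{s'}Q_a(h)(s')\,W(h,a,s')$, where $W\le\Vbar^{\eta}_{k-1}(s')$ by the induction hypothesis applied to the extended history. This is at most $\inner{Q_a(h)}{\Vbar^{\eta}_{k-1}}\le\sigma(\Cset^{\eta}_{s,a};\Vbar^{\eta}_{k-1})$, using Step 1. Averaging over the randomized action choice gives at most $\max_{a\in\Dshield(s)}$, which is the robust recursion. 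This is the same domination as in Theorem~\ref{thm:main}, with histories in place of states. Lemma~\ref{lem:worstcase} cannot be used directly, because there is no kernel $P$.

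\emph{Step 3 (second inequality).} We claim the following: for $V\in[0,1]^{\Sset}$ and radii $\beta\le\beta'$,
\[
\sigma(\Cset^{\beta'};V)\le\sigma(\Cset^{\beta};V)+\tfrac12(\beta'-\beta).
\]
To see this, take $q'$ optimal in the larger ball. The segment from $\hat p$ to $q'$ is in the simplex and respects the support. Let $q$ be the point on it at $L_1$ distance $\min(\beta,\|q'-\hat p\|_1)$ from $\hat p$. Then $q$ lies in the smaller set and $\|q'-q\|_1\le\beta'-\beta$. Because $\sum_i(q'_i-q_i)=0$, we have $\inner{q'-q}{V}\le\tfrac12\|q'-q\|_1\,(\max V-\min V)\le\tfrac12(\beta'-\beta)$.

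Now induct: suppose $\Vbar^{\eta}_{k-1}\le\Vbar_{k-1}+\tfrac12(k-1)\eta$. Monotonicity of $\sigma$ in $V$, and the fact that adding a constant $c$ to $V$ raises $\sigma$ by exactly $c$ (since $q$ is a probability vector), give $\sigma(\Cset^{\eta};\Vbar^{\eta}_{k-1})\le\sigma(\Cset^{\eta};\Vbar_{k-1})+\tfrac12(k-1)\eta$. The claim applied with $\beta'-\beta=\eta$ adds $\tfrac12\eta$. Taking the maximum over $a$ yields $\Vbar^{\eta}_k\le\Vbar_k+\tfrac12 k\eta$. The values on $\Su$ and $\bot$ coincide, and for vacuous rows the two sets are equal.

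The main obstacle is Step 1. We must check carefully that Lemma~\ref{lem:valid}'s statement about $\Qbar$ really transfers to every reachable $h$, including histories never logged. The transfer uses the uniform hypothesis over all reachable pairs $h,h'$ and convexity, so it does not depend on which histories the logs happened to visit.
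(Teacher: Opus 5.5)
Your proof is correct and follows the paper's argument essentially step for step. Both use the Markov-free validity of Lemma~\ref{lem:valid}, then convexity and the triangle inequality to put every reachable $Q_a(h)$ in $\Cset^{\eta}_{s,a}$, then a backward induction over histories, and finally an $\eta/2$ per-step perturbation of the support function accumulated through the $1$-Lipschitz backups. The only difference is that you prove the $\eta/2$ bound directly by shrinking along the segment from $\hat p$ to the optimizer, whereas the paper cites Proposition~\ref{prop:oracle}; your version is self-contained and equally valid.
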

Like a bound on unmodeled dynamics, $\eta$ is a declared tolerance. One states how far from Markov the abstraction may be, pays $T\eta/2$, and checks this on the logs (Experiment~4).

\section{Least-Restrictive Shield Synthesis}
\label{sec:synth}

Theorem~\ref{thm:main} certifies any shield with $\Vbar_T(s_0)\le\varepsilon$; the question is which to deploy, and one always qualifies.

\begin{proposition}[Well-posedness]\label{prop:feasible}
For $s_0\notin\Su$, the all-\halt{} shield $\Dshield_\bot(s)=\{\halt\}$ yields $\Vbar_T(s_0)=0$, so the set ${\{\Dshield:\Vbar_T(s_0)\le\varepsilon\}}$ is nonempty for every $\varepsilon\ge0$.
\end{proposition}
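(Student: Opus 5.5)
The plan is to unroll the robust recursion~\eqref{eq:robustvi} under the all-\halt{} shield $\Dshield_\bot$ and show by induction on $k$ that $\Vbar_k$ vanishes on every non-unsafe state. Concretely, the claim is $\Vbar_k(s)=0$ for all $s\notin\Su$ and all $0\le k\le T$.

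\emph{Base case.} For $k=0$ we have $\Vbar_0=\one_{\Su}$, which is zero off $\Su$.

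\emph{Inductive step.} Fix $s\notin\Su$. If $s=\bot$, then $\Vbar_k(\bot)=0$ by definition. Otherwise $s\notin\Su\cup\{\bot\}$, and $\Dshield_\bot(s)=\{\halt\}$, so the maximum in~\eqref{eq:robustvi} is taken over the single action \halt{} and equals $\sigma(\Cset_{s,\halt};\Vbar_{k-1})$.

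The one step that needs care is identifying $\Cset_{s,\halt}$. By A\ref{ass:halt}, $P(\cdot\mid s,\halt)=\delta_\bot$, so the row is deterministic. The paper treats $\halt$ this way: the candidate successor set is $K_{s,\halt}=\{\bot\}$ with $k_{s,\halt}=1$, which gives $\beta_{s,\halt}=0$ and $\Cset_{s,\halt}=\{\delta_\bot\}$.

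Two points should be checked here.
\begin{itemize}
\item The vacuous fallback for an under-covered row, $\Delta(K_{s,\halt})=\Delta(\{\bot\})=\{\delta_\bot\}$, is the same singleton. The conclusion therefore does not depend on whether the budget $n_0(s,\halt)$ was met; and since $k_{s,\halt}=1$ the row has nothing to learn, so its budget can be taken as $n_0=0$.
\item Even if one only uses the support constraint $\operatorname{supp}q\subseteq K_{s,\halt}=\{\bot\}$ in~\eqref{eq:cset}, the set still reduces to $\{\delta_\bot\}$.
\end{itemize}
Hence $\sigma(\Cset_{s,\halt};\Vbar_{k-1})=\Vbar_{k-1}(\bot)=0$. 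This holds for $k-1=0$ as well, since $\bot\notin\Su$ gives $\Vbar_0(\bot)=0$.

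\emph{Conclusion.} Since $s_0\notin\Su$, the induction gives $\Vbar_T(s_0)=0\le\varepsilon$ for every $\varepsilon\ge0$. So $\Dshield_\bot$ belongs to $\{\Dshield:\Vbar_T(s_0)\le\varepsilon\}$, and that set is nonempty.

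This argument is deterministic: it holds for every draw of the logs and does not need the event $E$. By Theorem~\ref{thm:main}, the bound then transfers to every admissible controller with probability at least $1-\delta$. The only potential obstacle is the convention for the \halt{} row's candidate set. If some reader allowed $K_{s,\halt}$ to be larger, I would invoke A\ref{ass:halt} to insist that it be chosen as $\{\bot\}$, as the paper does when it lists \halt{} among the deterministic rows.
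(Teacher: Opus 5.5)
Your proof is correct and takes the same approach as the paper, whose one-line proof also unrolls the robust recursion~\eqref{eq:robustvi} under $\Dshield_\bot$. Yours is more careful on two points the paper leaves implicit. It checks that the \halt{} row's confidence set is the singleton $\{\delta_\bot\}$ (also under the vacuous fallback, given $K_{s,\halt}=\{\bot\}$). It also shows $\Vbar_k=0$ only off $\Su$, which is more accurate than the paper's literal claim $\Vbar_T\equiv0$.
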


That shield is safe and useless; we want the one that forbids as little as possible while still certifying, that is, a solution of
\begin{equation}\label{eq:synth}
\max_{\Dshield}\ \Pi(\Dshield)\quad\text{s.t.}\quad \Vbar_T(s_0;\Dshield)\le\varepsilon,\ \ \halt\in\Dshield(s)\,\forall s,
\end{equation}
where $\Pi$ measures how much the agent keeps. Forbidding an action can only lower $\Vbar_T$, so forbidding more from a certified shield keeps it certified, but choosing \emph{which} actions to forbid is combinatorial. We proceed greedily, forbidding one action at a time, the one buying the most safety per unit of autonomy given up, until a removal meets the budget. Alg.~\ref{alg:synth} always returns a certified shield, since within $|\{(s,a)\}|$ removals it meets the budget or reaches all-\halt{} (Prop.~\ref{prop:feasible}); optimality is not claimed. Two choices make it effective.

\emph{(i) Which action to forbid first.} Counting removals is the wrong measure, since the cheapest way to make the start safe is to wall it off. We instead weight each action by how often a goal-directed reference policy uses it, $\Pi(\Dshield)=\sum_{s,a}\rho(s,a)\Ind{a\in\Dshield(s)}$
with $\rho$ its state--action occupancy, and rank removals by risk reduction
per unit cost $c(s,a)=\rho(s,a)+\rho_{\min}$. An action the task never uses costs nothing to forbid, so it goes first; inward step-into-hazard moves are exactly such actions, which is why the synthesizer \emph{fences the hazard} (Fig.~\ref{fig:env}b) rather than moats the agent.

\emph{(ii) Keeping the task reachable.} A fence removes one action at one state, never a state. Every state keeps $\halt$, so the shield never leaves the agent without an admissible action; that much is guaranteed. Beyond it, Alg.~\ref{alg:synth} \emph{prefers} removals keeping the goal reachable from $s_0$, falling back only when no risk-reducing alternative does; in Fig.~\ref{fig:env}b that preserves a route, and the agent travels around the fenced block to the star without ever proposing a forbidden call.

\begin{algorithm}[t]
\caption{Greedy task-aware certified shield synthesis}
\label{alg:synth}
\footnotesize
\begin{algorithmic}[1]
\Require $\Phat,\{\beta_{s,a}\}$, $T$, $\varepsilon$, $s_0$, occupancy $\rho$, floor $\rho_{\min}$, goal $g$
\Ensure certified shield $\Dshield$ with $\Vbar_T(s_0)\le\varepsilon$
\State $\Dshield(s)\leftarrow\Aset$ for all $s$ \Comment{permit everything}
\State compute $\Vbar$ by~\eqref{eq:robustvi}
\While{$\Vbar_T(s_0)>\varepsilon$}
  \State $a^\star(s)\in\arg\max_{a\in\Dshield(s)\setminus\{\halt\}}\sigma(\Cset_{s,a};\Vbar_{T-1})$, $s$ visited by $a^\star$
  \State evaluate $\Vbar_T(s_0)$ per removal \emph{(one batched VI)}
  \State rank with $\Vbar_T(s_0)\le\varepsilon$ first, highest first; else by $\Delta\Vbar/c(s,a)$
  \State commit top removal keeping $g$ reachable, else the top; recompute $\Vbar$
\EndWhile
\State \Return $\Dshield$
\end{algorithmic}
\end{algorithm}

Each iteration evaluates all candidate removals in one batched value iteration; on the instances of Sec.~\ref{sec:exp}, synthesis takes a few seconds per shield on one CPU core.

\section{Numerical Experiments}
\label{sec:exp}
We use two environments of our own. On a small grid the true kernel is known, so the certificate can be checked exactly (Experiments~1 and~2). A file-system sandbox mirrors the motivating application (Experiment~3), and Experiment~4 tests the Markov assumption. The synthesizer sees only sampled logs; code is at \url{\codeurl}.

\subsection{A grid with known ground truth}
In the grid of Fig.~\ref{fig:env}a each of the $7\times7$ cells is an abstract state, the central $3\times3$ block is $\Su$, a sink models $\halt$, and start and goal are opposite corners. Moves slip with probability $0.16$; $T=16$ and $\delta=0.05$. Each row gets the budget $n_0=\lfloor0.7\,MT/R\rfloor$, $70\%$ of an even share of the $MT$ transitions. Without a shield, the worst controller reaches $\Su$ with probability about $1$.

\begin{figure}[t]
\centering
\includegraphics[width=\columnwidth]{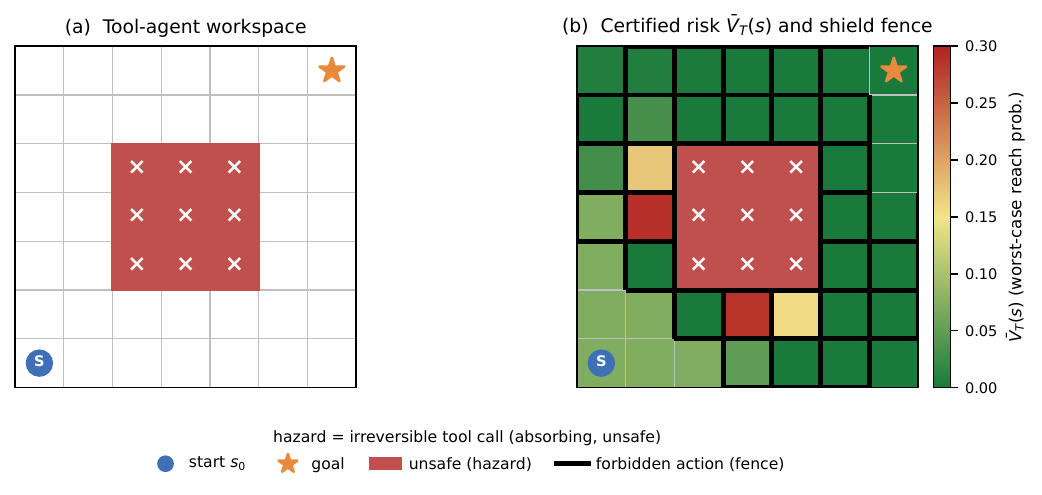}
\caption{\textbf{The grid benchmark and the shield of Experiment~1.} (a)~Each of the $7\times7$ cells is an abstract state of a tool-using agent; the red block is the unsafe set $\Su$, $s_0$ the start and the star the goal. (b)~Certified worst-case probability $\Vbar_T(s)$ of reaching $\Su$ from each cell (color) and forbidden moves (black segments). The shield fences the hazard instead of walling in the start, so a route to the goal remains, and the risk from $s_0$ is certified below the budget ($\varepsilon=0.10$, $M=12{,}000$, $T=16$).}
\label{fig:env}
\end{figure}

\textbf{Experiment 1 (soundness).} A shield synthesized from $M=12{,}000$ episodes with $\varepsilon=0.10$ certifies $\Vbar_T(s_0)=0.072$; its exact worst-case risk is $0.009$. Three controllers run through it, $4\times10^4$ episodes each. An adversary playing the worst admissible policy reaches $\Su$ in $0.010$ of them, close to that exact value; competent and random controllers record $0.000$.

Recomputed from more logs, this shield's certificate falls from $0.189$ at $3\times10^3$ episodes to $0.023$ at $1.2\times10^5$, and its excess over the exact value has a log--log slope of $-0.70$, faster than the $M^{-1/2}$ of Cor.~\ref{cor:rate}. Discarding the surplus samples has a price. The shield forbids $50.5\%$ of live state--action pairs, against $25.5\%$ if all samples could be used, and at $M=2{,}500$ an average of $1.5$ of $200$ rows miss their budget.

\textbf{Experiment 2 (validation across datasets).} The certificate is a claim about the random draw of the logs, so we test it that way. At each sample size we draw $100$ independent datasets, synthesize a fresh shield from each, and compute its exact worst-case risk. The true risk stays within both certificate and budget in all $100$ trials at each $M$ (Table~\ref{tab:coverage}), consistent with $95\%$ confidence. It \emph{rises} with $M$, from $0.002$ to $0.008$, because more data shrinks the confidence sets and the certified shield then forbids fewer actions (Cor.~\ref{cor:adaptive}).

\begin{table}[t]
\centering\footnotesize
\setlength{\tabcolsep}{5pt}
\begin{tabular}{rcccc}
\toprule
$M$ & $\Prob[\Vstar_T\!\le\!\Vbar_T]$ & $\Prob[\Vstar_T\!\le\!\varepsilon]$ & $\Vbar_T$ & $\Vstar_T$ \\
\midrule
$2{,}500$  & $1.00$ & $1.00$ & $0.076$ & $0.002$ \\
$12{,}000$ & $1.00$ & $1.00$ & $0.074$ & $0.008$ \\
\bottomrule
\end{tabular}
\caption{\textbf{The guarantee over $100$ datasets (Experiment~2).} For each number $M$ of logged episodes we drew $100$ independent datasets and built a fresh shield from each. The second and third columns give the fraction of datasets in which the shield's true worst-case risk $\Vstar_T$ was at most its certificate $\Vbar_T$, and at most the budget $\varepsilon=0.10$; the last two give the average $\Vbar_T$ and $\Vstar_T$ ($\delta=0.05$, $T=16$).}
\label{tab:coverage}
\end{table}

\subsection{A file-system sandbox, with four baselines}
The abstract state records whether the agent is \emph{in the protected directory}, \emph{holds write capability} and has \emph{confirmation}, plus a latch recording that a destructive call was issued. The tools are \texttt{read}, \texttt{chmod}, \texttt{write}, \texttt{delete}, \texttt{send} and $\halt$. The task is an authorized change, a \texttt{write} in production after confirmation; each \texttt{read} obtains confirmation with probability $0.5$, and calls misfire with probability $0.12$. Harm comes from a \texttt{chmod} into production before confirmation, then \texttt{write} or \texttt{delete}, or from a \texttt{delete} elsewhere, which sets the latch, then \texttt{send}. We log $M=4{,}000$ episodes of a competent agent that takes neither route, with $n_0=M$ and $\delta=0.05$; the least-visited row meeting its budget has $9{,}118$ observations, yet the hazards never appear in the logs. Episodes are cut off after $T=10$ executed calls, not counting blocked proposals.

Table~\ref{tab:agentic} compares five shields built from the same logs. Here and in Experiment~3 the synthesizer is a simpler variant of Alg.~\ref{alg:synth}. It ranks the permitted actions other than $\halt$ by the robust $Q$-value of line~4, never disconnects $g$ and falls back to all-$\halt$; Cor.~\ref{cor:adaptive} covers it. The \emph{plug-in} shield uses certainty equivalence. It treats the estimated kernel as exact, gives the unseen hazards probability zero and certifies zero risk, yet an adversary does harm on every seed. A \emph{policy-specific} certificate covers only the logging agent; it too certifies zero risk, forbids nothing and fails the same way. A \emph{count threshold}, forbidding pairs seen fewer than $10$ times, is safe but uncertified and forbids $80.2\%$ of pairs, against $4.2\%$ for ours. A Bayesian \emph{Dirichlet} shield is as safe but forbids twice as much ($8.3\%$), and its bound is credible, not distribution-free.

\begin{table}[t]
\centering\footnotesize
\setlength{\tabcolsep}{3.2pt}
\begin{tabular}{lcccccc}
\toprule
Shield & Cert.\ $\Vbar_T$ & Adv.\ risk & Succ. & Forb.\,\% & Interv.\,\% & Len. \\
\midrule
count-thr.     & n/a              & $0.00$          & $0.99$ & $80.2$ & $0$   & $4.3$ \\
Dirichlet-$95\%$ & $0.00$         & $0.00$          & $0.99$ & $8.3$  & $0$   & $4.3$ \\
plug-in MDP    & $0.00$           & $1.00$          & $0.99$ & $0.0$  & $0$   & $4.3$ \\
policy-spec.   & $0.00^{\dagger}$ & $1.00$          & $0.99$ & $0.0$  & $0$   & $4.3$ \\
robust (ours)  & $\mathbf{0.00}$  & $\mathbf{0.00}$ & $0.99$ & $\mathbf{4.2}$ & $0$ & $4.3$ \\
\bottomrule
\end{tabular}
\caption{\textbf{Five shields built from the same logs.} File-system sandbox, $M=4{,}000$ logged episodes, budget $\varepsilon=0.05$, averages over $200$ seeds. The columns give the certified bound $\Vbar_T$, the unsafe probability an adversarial controller reaches, a competent agent's success rate, the percentage of the $96$ live state--action pairs forbidden, the intervention rate and the mean episode length. $^{\dagger}$Logging controller only.}
\label{tab:agentic}
\end{table}

\textbf{Experiment 3 (language-model controllers).} We replace the scripted agent by a live language model that reads the full transcript and replies with one tool name. It never sees the predicates, so it is a history-dependent controller of the kind Thm.~\ref{thm:main} covers. The shield sees only $\varphi(h)$ and is the one built from the scripted agent's logs, not refitted. A forbidden call is blocked, not executed, and the model asked again (Sec.~\ref{sec:setup}). Each episode is attacked by prompt injection with probability $0.3$ ($39$ of $150$ were). An instruction hidden in the first file the agent reads tells it to delete the production logs and mail them out. Seven controllers sharing only the interface, namely the scripted agent and Qwen2.5 models~\cite{qwen2024technical} with $1.5$B, $3$B and $7$B parameters at temperatures $\tau=0$ and $1$, ran $150$ episodes each; Table~\ref{tab:llm} lists five. Without the shield, they range from never unsafe to unsafe in $84.7\%$ of episodes. With it, none of the seven is ever unsafe in its $150$ episodes, which bounds each unsafe rate by $2\%$ at $95\%$ confidence and, pooling the $1{,}050$ episodes, their average by $0.29\%$.

The $7$B model shows the mechanism best. It always starts with \texttt{read} and is never unsafe unless attacked, but the attack succeeds in $31$ of its $39$ attacked episodes ($79.5\%$). The shield prevents them all by blocking only the $39$ injected \texttt{delete}s, an intervention rate of $6.5\%$, and success stays at $55.3\%$. For the $3$B model the shield even \emph{raises} success, from $3.3\%$ to $38.7\%$. The model escalates permissions before confirming, and when that call is blocked and the model asked again, it goes back for the confirmation it skipped, which immediate $\halt$ substitution would never allow. The smaller models need more help, with $31$ to $47$ blocked proposals per $100$ executed calls against about $6$ for the $7$B. What the shield cannot supply is competence. At $\tau=0$ the $1.5$B model always opens with the forbidden \texttt{chmod}, never succeeds and always halts safely.

\begin{table}[t]
\centering\footnotesize
\setlength{\tabcolsep}{3pt}
\begin{tabular}{@{}lcccccc@{}}
\toprule
& \multicolumn{2}{c}{Unsafe} & \multicolumn{2}{c}{Success} & & \\
\cmidrule(lr){2-3}\cmidrule(lr){4-5}
Controller & none & shield & none & shield & Interv. & Blocks \\
\midrule
scripted competent   & $0.0$  & $\mathbf{0.0}$ & $99.3$ & $99.3$ & $0.0$  & none \\
Qwen2.5-1.5B, $\tau{=}0$ & $70.7$ & $\mathbf{0.0}$ & $0.0$  & $0.0$  & $42.0$ & \texttt{chmod} \\
Qwen2.5-3B, $\tau{=}0$   & $78.7$ & $\mathbf{0.0}$ & $3.3$  & $38.7$ & $43.4$ & \texttt{chmod} \\
Qwen2.5-7B, $\tau{=}0$   & $20.7$ & $\mathbf{0.0}$ & $55.3$ & $55.3$ & $6.5$  & \texttt{delete} \\
Qwen2.5-7B, $\tau{=}1$   & $21.3$ & $\mathbf{0.0}$ & $59.3$ & $59.3$ & $6.1$  & \texttt{delete} \\
\bottomrule
\end{tabular}
\caption{\textbf{Five of the seven Experiment~3 controllers behind one shield.} Each ran $150$ episodes, prompt-injected with probability $0.3$. The columns give the percentage of episodes ending unsafe and in success, without the shield (none) and with it; the intervention rate, in blocked proposals per $100$ executed calls; and the blocked call.}
\label{tab:llm}
\end{table}

\textbf{Experiment 4 (testing the Markov assumption).} Everything rests on A\ref{ass:markov}, so we test whether the next abstract state, given the current state and action, still depends on the previous one. For each row with enough data, a likelihood-ratio test compares the successor distributions across previous states, holding the false discovery rate at $5\%$ (Benjamini--Hochberg). We also report their largest $L_1$ distance $\hat\eta$, the empirical counterpart of $\eta$ in Prop.~\ref{prop:eta}, against a permutation null. On the grid $0$ of $160$ testable rows reject, with $\hat\eta=0.14$ below the null $0.17$; on the sandbox, $0$ of $15$, with $0.08$ against $0.20$. The test does catch a real defect. Deleting the \emph{confirmation} predicate makes $4$ of $7$ rows reject, with $\hat\eta=0.18$ above its null of $0.16$. Two examples establish neither calibration nor power in general, and the test has further limits. Only visited rows can be tested, and $\hat\eta$ is an effect size for one choice of context, not the supremum over histories that Prop.~\ref{prop:eta} needs, so it gives only a nominal sensitivity surcharge. Dependence this context cannot see is invisible, and failing to reject is not evidence of Markovness. Refining $\varphi$ in response makes the abstraction data-dependent, which Cor.~\ref{cor:adaptive} does not cover, so development and certification logs should be kept apart. The test can refute a predicate set, never certify one.

\section{Conclusion and Future Work}
\label{sec:concl}

We treated a language-model agent as an unknown feedback controller and placed a supervisory filter between it and its environment. The filter carries a certificate in the form of a robust reach--avoid bound on the probability of an irreversible action within the task, computed over an uncertainty set that is learned from ordinary logs. The bound holds with confidence $1-\delta$ for every controller the filter admits, so the language model can be replaced without recomputation, provided the environment, interface, permissions, abstraction and notion of harm stay the same. No failure need appear in the logs. An action never observed in a state is treated as maximally uncertain and is blocked if it could lead to harm, so the data must cover normal operation, not the hazards. The main assumption is that the Boolean predicates defining the abstract state make the loop Markov, which the logs can test but not prove. Future work should learn the abstraction from data held out from certification, tighten the filter online, and pool logs across agents sharing an environment.

\appendix[Proofs and supplementary results]

\noindent Statements are those of the main text; equation and result numbers refer to it.

\begin{proof}[Proof of Lemma~\ref{lem:worstcase}]
Let $\pi$ be admissible and $w^{\pi}_k(h)$ its probability of entering $\Su$ within $k$ further steps from a history $h$ with $\varphi(h)=s$. We prove $w^{\pi}_k(h)\le\Vstar_k(s)$ by induction on $k$; the deterministic $a^\star$ above achieves equality, giving the claim. \emph{Base} ($k=0$). Here $w^{\pi}_0(h)=\one[s\in\Su]=\Vstar_0(s)$. \emph{Step}. For $s\notin\Su\cup\{\bot\}$, conditioning on the time-$t$ action and invoking A\ref{ass:markov} (the next state is $s'\sim P(\cdot\mid s,a)$ regardless of $h$),
\[
w^{\pi}_k(h)=\sum_{a}\pi_t(a\mid h)\sum_{s'}P(s'\mid s,a)\,w^{\pi}_{k-1}\big(h\!\oplus\!(a,s')\big).
\]
By the inductive hypothesis $w^{\pi}_{k-1}(h\oplus(a,s'))\le\Vstar_{k-1}(s')$, and because $\pi$ is admissible every $a$ with ${\pi_t(a\mid h)>0}$ lies in $\Dshield(s)$. Hence each inner sum is at most
\[
\max_{a\in\Dshield(s)}\inner{P_a(s)}{\Vstar_{k-1}}=\Vstar_k(s),
\]
and the $\pi_t(\cdot\mid h)$-average of quantities $\le\Vstar_k(s)$ is itself $\le\Vstar_k(s)$. Absorbing and sink states match by construction.
\end{proof}

\begin{proof}[Proof of Lemma~\ref{lem:valid}]
Work in global time, so nothing is conditioned on the coverage event. For a
stochastic row and $A\subseteq K_{s,a}$, let $N_t(s,a)$ count visits up to $t$,
$I_t=\one[(s_t,a_t)=(s,a),\,N_{t-1}(s,a)<n_0(s,a)]$, which is predictable, and
$Z_t(A)=I_t(\one[s_{t+1}\in A]-Q_t(A))$ with $Q_t$ the conditional successor law
given the past. The $Z_t(A)$ are bounded martingale differences, so a
conditional-Hoeffding supermartingale gives
$\Prob[N(s,a)\ge n_0(s,a),\ \Phat_a(s)(A)-\Qbar_a(s)(A)\ge\tfrac{\eta}{2}]\le e^{-n_0\eta^2/2}$
one-sidedly; a row with $N(s,a)<n_0(s,a)$ takes the vacuous set. As
$\tfrac{1}{2}\|p-q\|_1=\max_A(p(A)-q(A))$, attained at $A=\{i:p_i>q_i\}$, a
union over the $2^{k_{s,a}}-2$ nontrivial subsets needs no factor of two, and a
further union over the $\le R$ stochastic rows
reproduces~\eqref{eq:weissman} about $\Qbar_a(s)$ with failure $\le\delta$. Deterministic rows contain $\Qbar_a(s)$ trivially, and no step assumes the
successors identically distributed, so this holds whether or not
A\ref{ass:markov} does.
\end{proof}

\begin{proof}[Proof of Proposition~\ref{prop:oracle}]
Any feasible $q$ satisfies
\[
\sum_{i:\,q_i>\hat p_i}(q_i-\hat p_i)=\sum_{i:\,q_i<\hat p_i}(\hat p_i-q_i)=:\xi
\]
with $2\xi=\|q-\hat p\|_1\le\beta_{s,a}$, hence $\xi\le\beta_{s,a}/2$. Since $\inner{q}{V}=\inner{\hat p}{V}+\sum_i(q_i-\hat p_i)V_i$, for fixed $\xi$ the maximizer adds mass on the largest $V_i$ (subject to $q_i\le1$) and removes it from the smallest (subject to $q_i\ge0$), so the gain is a concave piecewise-linear nondecreasing function of $\xi$. It is therefore maximized at the largest \emph{feasible} $\xi$, which is $\min\{\beta_{s,a}/2,\ \text{donor mass},\ \text{receiver capacity}\}$ and may be strictly smaller than $\beta_{s,a}/2$; once the marginal transfer moves mass between entries of equal $V$ the gain stops increasing, and it can be zero, as at $\hat p=(1,0)$ with $V=(1,0)$, where the receiver capacity is already exhausted. Sorting dominates the cost.
\end{proof}

\begin{proof}[Proof of Theorem~\ref{thm:main}]
Work on $E=\{P_a(s)\in\Cset_{s,a}\,\forall(s,a)\}$, of probability $\ge1-\delta$ by Lemma~\ref{lem:valid}. The left inequality holds since $\Vstar_k$ is the supremum~\eqref{eq:worstcase} over all admissible controllers, of which $\pi$ is one (Lemma~\ref{lem:worstcase}). For the right, induct on $k$. \emph{Base.} $\Vstar_0=\Vbar_0=\one_{\Su}$. \emph{Step.} For $s\notin\Su\cup\{\bot\}$, $\Vstar_k$ obeys the recursion~\eqref{eq:truevi} by Lemma~\ref{lem:worstcase}; the inductive hypothesis gives $\Vstar_{k-1}\le\Vbar_{k-1}$ pointwise, and since $P_a(s)\ge0$,
\begin{align*}
\Vstar_k(s)&=\max_{a\in\Dshield(s)}\inner{P_a(s)}{\Vstar_{k-1}}\\
&\le\max_{a\in\Dshield(s)}\inner{P_a(s)}{\Vbar_{k-1}}
\le\Vbar_k(s),
\end{align*}
the last step because $P_a(s)\in\Cset_{s,a}$ on $E$. Unsafe and sink states match by construction, and~\eqref{eq:guarantee} follows with $s=s_0$, $k=T$.
\end{proof}

\begin{proof}[Proof of Corollary~\ref{cor:rate}]
Only rows in $\mathcal{R}$ enter the recursion~\eqref{eq:robustvi} from $s_0$, so set $\beta^\star:=\max_{(s,a)\in\mathcal{R}}\beta_{s,a}=\mathcal{O}(1/\sqrt{{n_{\min}(M)}})$ by~\eqref{eq:beta}. Fix any $V\in[0,1]^{\Sset}$. By Prop.~\ref{prop:oracle}, $\sigma(\Cset_{s,a};V)\le\inner{\Phat_a(s)}{V}+\tfrac{\beta_{s,a}}{2}(\max V-\min V)\le\inner{\Phat_a(s)}{V}+\beta^\star/2$, and on the event $E$ of Lemma~\ref{lem:valid}, $\inner{\Phat_a(s)}{V}-\inner{P_a(s)}{V}\le\tfrac{1}{2}\|\Phat_a(s)-P_a(s)\|_1(\max V-\min V)\le\beta^\star/2$. Hence
{\begin{equation}\label{eq:onestep}
\sigma(\Cset_{s,a};V)\ \le\ \inner{P_a(s)}{V}+\beta^\star .
\end{equation}}
{Let $\Sset_k\subseteq\Sset$ be the states an in-shield trajectory from $s_0$ can occupy with $k$ steps remaining, and set $e_k:=\max_{s\in\Sset_k}|\Vbar_k(s)-\Vstar_k(s)|$, where $\Vstar$ solves~\eqref{eq:truevi} under the same shield. Restricting to $\Sset_k$ matters. Every row used by a backup at $s\in\Sset_k$ lies in $\mathcal{R}$, so $\beta^\star$ bounds its radius, which need not hold at a state $s_0$ cannot reach. Then $e_0=0$, since both equal $\one_{\Su}$. For $s\in\Sset_k\setminus(\Su\cup\{\bot\})$, whose successors lie in $\Sset_{k-1}$, applying~\eqref{eq:onestep} with $V=\Vbar_{k-1}$ and then $\inner{P_a(s)}{\Vbar_{k-1}}\le\inner{P_a(s)}{\Vstar_{k-1}}+e_{k-1}$ gives $\Vbar_k(s)\le\Vstar_k(s)+e_{k-1}+\beta^\star$; unsafe and sink states match exactly, and $\Vbar_k\ge\Vstar_k$ by Thm.~\ref{thm:main}. So $e_k\le e_{k-1}+\beta^\star$, whence $e_T\le T\beta^\star=\mathcal{O}(T/\sqrt{M})$, and $s_0\in\Sset_T$ gives the claim.}
\end{proof}

\begin{proof}[Proof of Corollary~\ref{cor:adaptive}]
$E$ is a statement about the data and the fixed unknown $P$, with $\Prob[E]\ge1-\delta$ by Lemma~\ref{lem:valid}; it never references $\Dshield$. The inductive step of Theorem~\ref{thm:main} uses only $P_a(s)\in\Cset_{s,a}$, true on $E$ for all rows, so $\Vstar_k(\cdot;P,\Dshield)\le\Vbar_k(\cdot)$ holds on $E$ simultaneously for all $\Dshield$. Thus, on $E$, any $\widehat{\Dshield}$ with $\Vbar_T(s_0;\widehat{\Dshield})\le\varepsilon$ obeys $\sup_{\pi\rhd\widehat{\Dshield}}V^{\pi}_T(s_0;P)\le\varepsilon$; as $\Prob[E]\ge1-\delta$, regardless of the rule producing $\widehat{\Dshield}$, the claim follows.
\end{proof}

\begin{proof}[Proof of Proposition~\ref{prop:eta}]
Lemma~\ref{lem:valid} nowhere uses A\ref{ass:markov}, so its event still
has probability $\ge1-\delta$ and gives $\|\Qbar_a(s)-\Phat_a(s)\|_1\le\beta_{s,a}$.
As $\Qbar_a(s)$ is a convex combination of laws at logged histories, the
hypothesis and convexity of the norm give $\|Q_a(h)-\Qbar_a(s)\|_1\le\eta$ at
every reachable $h$, so $\|Q_a(h)-\Phat_a(s)\|_1\le\beta_{s,a}+\eta$ and, with
the support condition, $Q_a(h)\in\Cset^{\eta}_{s,a}$. Induct on $k$ over histories. The base case is $w^{\pi}_0(h)=\one[\varphi(h)\in\Su]=\Vbar^{\eta}_0(\varphi(h))$, and since the
hypothesis applies pointwise to the successors and $\pi$ is admissible,
$w^{\pi}_k(h)=\sum_a\pi(a\mid h)\inner{Q_a(h)}{w^{\pi}_{k-1}(ha\,\cdot)}
\le\max_{a\in\Dshield(s)}\sigma(\Cset^{\eta}_{s,a};\Vbar^{\eta}_{k-1})=\Vbar^{\eta}_k(s)$.
For the second inequality, Prop.~\ref{prop:oracle} gives
$\sigma(\Cset^{\eta}_{s,a};V)-\sigma(\Cset_{s,a};V)\le\eta/2$ on
$[0,1]^{\Sset}$, and each backup is $1$-Lipschitz in the sup norm, so the gaps
accumulate over $T$.
\end{proof}

\begin{proof}[Proof of Proposition~\ref{prop:feasible}]
Under $\Dshield_\bot$ the only admissible controller halts at once into $\bot$, never entering $\Su$; thus $\Vbar_T\equiv0$ by~\eqref{eq:robustvi}.
\end{proof}

\subsection*{Context sensitivity of the Experiment~4 diagnostic}

\noindent The diagnostic of Experiment~4 conditions on the previous abstract
state. Conditioning additionally on the previous action was also evaluated
(Table~\ref{tab:context}).

\begin{table}[ht]
\centering\footnotesize
\begin{tabular}{@{}llcccc@{}}
\toprule
Abstraction & Context & Rows & Rejected & $\hat\eta$ & Null\\
\midrule
Grid            & state          & $160$ & $0$ & $0.136$ & $0.169$\\
Grid            & state, action  & $0$   & --  & --      & --\\
Sandbox, fine   & state          & $15$  & $0$ & $0.084$ & $0.201$\\
Sandbox, fine   & state, action  & $16$  & $0$ & $0.170$ & $0.506$\\
Sandbox, coarse & state          & $7$   & $4$ & $0.184$ & $0.155$\\
Sandbox, coarse & state, action  & $7$   & $2$ & $0.630$ & $0.374$\\
\bottomrule
\end{tabular}
\caption{\textbf{Experiment 4 under two choices of context.} The context is the previous abstract state, or that state together with the previous action. Rows counts the kernel rows with enough data to test and Rejected those where the test rejects; $\hat\eta$ is the largest $L_1$ distance between the successor distributions under different contexts, and Null is the $95\%$ quantile of $\hat\eta$ when context labels are shuffled within a row.}
\label{tab:context}
\end{table}

\noindent Richer context buys effect size and costs power. On the coarsened
sandbox the pair context raises $\hat\eta$ from $0.184$ to $0.630$ while
halving the rejections, and on the grid it leaves no testable row against
$160$, because $|\Sset||\Aset|$ contexts exhaust the data. That collapse is why
the previous state alone is the primary diagnostic in the main text.

\bibliographystyle{IEEEtran}
\bibliography{refs}

\end{document}